\newtheorem{theorem}{Theorem}[section]
\newtheorem{corollary}[theorem]{Corollary}
\newtheorem{definition}[theorem]{Definition}
\newtheorem{lemma}[theorem]{Lemma}
\newcommand{\qed}{\rule{6pt}{6pt}}
\newenvironment{proof}{\noindent{\bf Proof:}}{\qed\bigskip}
\title{The Two\,Bicliques Problem is in NP\,$\cap$\,coNP}
\author{M. A. Shalu \hspace{2em}  S. Vijayakumar\\
	Indian Institute of Information Technology\\ Design \& Manufacturing 
	(IIITD\&M) Kancheepuram \\ IIT Madras Campus\\ Chennai 600036, India\\
	$\{\mbox{shalu, vijay}\}$@$\mbox{iiitdm.ac.in}$}
\begin{document}
\maketitle
\begin{abstract}
We show that the problem of deciding whether the vertex set of  a graph can be covered with at most two bicliques is in NP$\cap$coNP. We thus almost determine the computational complexity of a problem whose status has remained open for quite some time. Our result implies that a polynomial time algorithm for the problem is more likely than it being  NP-complete unless P = NP.\bigskip \\
{\bf keywords:} Bicliques, Polynomial Time Algorithms, NP, coNP
\end{abstract}

\section{Introdution}
The problem of covering {\em the vertex set} of a graph with a minimum number of bicliques is one of the basic problems of graph theory with numerous applications of both theoretical and  practical importance~\cite{hms01,mrs04,mou04,nj05,njbj03,pee03}. Heydari, Morales, Shields Jr., and Sudborough show that the corresponding decision problem of determining whether a graph can be covered with at most $k$ bicliques is NP-complete~\cite{hmss07}. Indeed, Fleischner, Mujuni, Paulusma, and Szeider show that this decision problem remains NP-complete even when $k$ is a fixed integer greater than two and not part of the input~\cite{fmps09}.

Interestingly, the complexity of deciding whether the vertex set of a graph can be covered with at most two bicliques has remained a challenging open problem. In particular, any theoretical evidence in favor of the problem either having an efficient algorithm or being NP-complete has remained elusive; see, for instance, \cite{bbmmss08,dms11,figu11,fmps09,hmss07}. In fact, Figueiredo classifies this problem, among a few others, as one of the important problems even in the P versus NP arena~\cite{figu11}.

In this paper,  we establish that this problem is in NP$\cap$coNP. This effectively settles the problem in favor an efficient algorithm. For we learn from computational complexity theory that such a problem is least likely to be NP-complete. For otherwise, the polynomial hierarchy is known to collapse to the first level~\cite{gj79,pap94}. And problems that were seen to be in NP$\cap$coNP have invariably been found subsequently to be in P as well~\cite{pap94}. 

Despite the fact that the problem allows efficient algorithms for several special classes of graphs~\cite{bbmmss08,deffk08,dms11,fmps09}, our result still comes as a surprise for at least two reasons: (i) The closely related problem of deciding whether the vertex set of a connected graph can be covered with two $P_4$-free graphs is  shown to be NP-complete by  H\`{o}ang  and  Le  \cite{hl01}. (ii) Deciding whether a graph can be covered with two bicliques is essentially equivalent to deciding whether a connected graph has a disconnected vertex cut (see Lemma~\ref{bp2} or~\cite{fmps09}, for instance) but the closely related problem of deciding whether a connected graph has an independent vertex cut is known to be NP-complete~\cite{bdls00,chv84,lr03}.
[But a clique vertex cut is known to have a polynomial time algorithm~\cite{whi81}.] 

\paragraph{Note:} Covering the vertex set of a graph with a minimum number of bicliques turns to be equivalent to partitioning the vertex set of the underlying graph into a minimum number of parts so that the induced subgraph on each part is covered by exactly one biclique. Therefore, by {\em partitioning a grpah into a minimum number of bicliques}, we essentially mean covering the vertex set of the graph with a minimum number of bicliques. 

\paragraph{Notation:} We denote by {BP$k$} the set of all graphs $G$ such that $G$ can be {\em partitioned} into at most $k$ bicliques or, equivalently, such that the vertex set of $G$ can be covered with  at most $k$ bicliques. 

By $\overline{\mbox{BP$k$}}$, we denote the set of all graphs $G$ such that 
$G\notin\mbox{BP$k$}$. Equivalently, $\overline{\mbox{BP$k$}}$ is the set of all graphs $G$ such that every partition of $G$ into bicliques has more than $k$ parts. 

We use {BP} for denoting the set of all pairs $(G, k)$ such that the graph $G$ can be partitioned into at most $k$ bicliques.

By convention, we will use {BP$k$}, $\overline{\mbox{BP$k$}}$, and {BP} for denoting the membership problems associated with these sets.

\paragraph{Related Work:} Bein, Bein, Meng, Morales,  Shields Jr., and Sudborough show that  it is NP-hard to find a $c$-approximation algorithm for {BP} for any constant $c$, apart from presenting a polynomial time exact algorithm for {BP$k$} restricted to bipartite graphs and restricted to certain other families of graphs~\cite{bbmmss08}.

The result of Fleischner, Mujuni, Paulusma, and Szeider that {BP$k$} is NP-complete for each fixed $k \ge 3$ also rules out a fixed parameter tractable algorithm for {BP} unless P = NP~\cite{fmps09}. They moreover show that a certain natural bounded version of {BP} remains NP-complete and is W[2]-complete~\cite{df99}. In contrast, they  show the edge set version of biclique cover and biclique partition problems, which are known to be NP-complete \cite{jr93,mul96,orl77} to be fixed parameter tractable.   Their work includes a polynomial time algorithm for {BP2} restricted to a family of graphs that includes bipartite graphs. 

Recently, Dantas, Maffray, and Silva provide a list of several natural families of graphs such that there is a polynomial time algorithm for {BP2} when restricted to graphs in each of these families~\cite{dms11}. The list of families of graphs they consider includes $K_4$-free graphs, diamond-free graphs, planar graphs, bounded treewidth graphs, claw-free graphs, and $(C_5, P_5)$-free graphs.  

{\em Bicliques} are one of the most sought-after structures of graphs, mainly due to their importance in applications, and has  given rise to numerous computational problems involving bicliques from diverse branches of science; please consult the references. 

\section{Preliminaries}
In this paper, we consider finite undirected simple graphs. We begin by formally defining a biclique as well as a star of a graph.

\begin{definition}
\begin{enumerate}
\item A subgraph $H$ of a graph $G$ is said to be a {\em biclique} if $H$ is 
	isomorhic to either the complete graph $K_1$ or the complete bipartite 
	graph $K_{m,n}$ for some $m,n\ge 1$. 
\item A biclique $H$ of a graph $G$ is said to be a {\em star} if $H$ is 
	isomorphic to either the complete graph $K_1$ or the complete bipartite graph 
	$K_{1,n}$ for some $n\ge 1$. The {\em center} of a star $H$ is defined 
	naturally. 
\end{enumerate}
\end{definition}

We now review the standard graph theory terminology and notation that we use. 

\begin{definition}
\begin{enumerate}
\item  $\bar{G}$ denotes the complement of a graph $G$. 
\item The empty graph on $n$ vertices is denoted by $nK_1$: $ nK_1 
	= \bar{K}_n$.
\item For a graph $G = (V, E)$ and $v\in V$, $N_G(v)$ denotes 
	the set all vertices that are adjacent to $v$. [$N(v)$ does not include $v$.] 
	We define  $N_G[v] = N_G(v) \cup \{v\}$. We use $N(v)$ and $N[v]$ for 
	these sets when $G$ is understood. 
\item For a graph $G = (V, E)$ and a set $A\subseteq V$, $G[A]$ 
	denotes the induced subgraph of $G$ on the vertices of $A
	$.
\item For a graph $G$ and a vertex  $v$ of it, $G-v$ denotes the 
	induced subgraph on $V(G) \setminus \{v\}$.
\item For a graph $G = (V, E)$ and a set $A\subseteq V(G)$, 
	$G - A$ denotes the induced graph on $V(G) \setminus A$.
\item A vertex $v$ of a connected graph $G$ is said to be a {\em 
	cut vertex} if $G - v$ is disconnected.
\item A set $X$ of vertices of a connected graph $G$ is said to 
	be a {\em vertex cut} if $G - X$ is disconnected.
\end{enumerate}
\end{definition}

We record a simple characterization of BP2 that is in the folklore. We state and prove it for compleness. Naturally, it turns to be a characterization for $\overline{\mbox{BP2}}$ as well. We begin with the following.

\begin{lemma}
\label{lemma-bp1}
A graph $G\neq K_1$ is in {BP1} if and only if $\bar{G}$ is disconnected.
\end{lemma}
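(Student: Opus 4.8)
The plan is to unwind the definition of BP1 and then transport the resulting adjacency condition across the complement. First I would observe that, since $G \neq K_1$, a single biclique covering all of $V(G)$ cannot be the trivial biclique $K_1$; hence $G \in \mbox{BP1}$ means precisely that $G$ contains a spanning biclique subgraph isomorphic to $K_{m,n}$ for some $m,n \ge 1$. Spelled out, this says that $V(G)$ admits a partition into two nonempty parts $A$ and $B$ such that every edge with one endpoint in $A$ and the other in $B$ is present in $G$. Note that the biclique ignores any edges internal to $A$ or to $B$, so $G$ may well carry such edges; only the completeness of the bipartite cut between $A$ and $B$ is required.

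The bridge to the complement is the elementary remark that an $A$--$B$ edge lies in $G$ if and only if it is absent from $\bar{G}$. Thus the condition ``all $A$--$B$ edges belong to $G$'' is equivalent to ``no $A$--$B$ edge belongs to $\bar{G}$,'' i.e.\ to the assertion that $A$ and $B$ are separated in $\bar{G}$. For the forward direction I would take the partition $(A,B)$ furnished by BP1 membership; since both parts are nonempty and no edge of $\bar{G}$ crosses between them, $\bar{G}$ has no path from $A$ to $B$ and is therefore disconnected. For the converse I would start from a disconnected $\bar{G}$, let $A$ be the vertex set of one connected component and $B = V(G) \setminus A$; both are nonempty, and disconnectedness guarantees that $\bar{G}$ has no $A$--$B$ edge, so in $G$ every pair with one endpoint in $A$ and the other in $B$ is adjacent. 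This exhibits the spanning $K_{|A|,|B|}$ witnessing $G \in \mbox{BP1}$.

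I do not expect a genuine obstacle here; the content is a matter of reading the definitions correctly. The only point that needs care is the covering-versus-isomorphism distinction: BP1 membership demands merely a spanning complete bipartite subgraph, not that $G$ itself be complete bipartite, and it is exactly this relaxation that lets the clean complement characterization go through irrespective of the edges internal to $A$ and to $B$.
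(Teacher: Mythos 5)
Your proof is correct and follows essentially the same route as the paper's: both directions pass the two-part partition $(A,B)$ through the complement, using that the cross edges of a spanning $K_{m,n}$ in $G$ correspond exactly to the absence of $A$--$B$ edges in $\bar{G}$. Your added remark about the covering-versus-isomorphism distinction is a fair point of care but does not change the argument.
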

\begin{proof}
Let $G\in \mbox{BP1}$. Then it possible that $G = K_1$; otherwise let $[A,B]$  be a partion of $V(G)$ such that each vertex of $A$ is connected to every vertex of $B$. Then the complement graph $\bar{G}$ has no vertex of $A$ connected to any vertex of $B$. 

Conversely, if $G = K_1$ then it is a trivial biclique and belongs to BP1. Otherwise, assume that $\bar{G}$ is disconnected and set 
$A$ to the set of vertices of a connected component of $\bar{G}$ and $B$ to $V(G) \setminus A$. It follows that there is a biclique structure across $A$ and $B$ and so $G\in \mbox{BP1}$. 
\end{proof}

\begin{lemma}
\label{bp2n1}
A graph $G \neq 2K_1$ is in $\mbox{BP2}\setminus \mbox{BP1}$ if and only if $\bar{G}$ is connected but has either a cut vertex or a disconnected vertex cut.
\end{lemma}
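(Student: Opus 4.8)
The plan is to reduce the statement to a \emph{localized} version of Lemma~\ref{lemma-bp1}. The key observation I would record first is that a vertex set $P$ with $|P|\ge 2$ can be covered by a single biclique of $G$ if and only if $\bar G[P]$ is disconnected, while any singleton $P$ is trivially covered by the biclique $K_1$. This is exactly Lemma~\ref{lemma-bp1} read off on the induced subgraph $G[P]$, using that a covering biclique need not be induced: one only needs a complete bipartite \emph{spanning} subgraph of $G[P]$, i.e.\ a bipartition of $P$ with all cross pairs present as edges of $G$, equivalently a separation of $\bar G[P]$. Calling a part \emph{coverable by a single biclique} a \emph{biclique part}, this dictionary turns a partition of $V(G)$ into two biclique parts directly into a separation structure of $\bar G$, and conversely.

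For the forward direction, suppose $G\in\mbox{BP2}\setminus\mbox{BP1}$ with $G\neq 2K_1$. Since $G\notin\mbox{BP1}$ we have $G\neq K_1$, so Lemma~\ref{lemma-bp1} gives that $\bar G$ is connected. Fix a partition $V(G)=V_1\cup V_2$ witnessing membership in BP2; both parts are nonempty, since otherwise $G$ itself would be a single biclique part and lie in BP1. I would then split into two cases. If some part is a singleton, say $V_2=\{v\}$, then $V_1=V(G)\setminus\{v\}$ must have at least two vertices (the only alternative forces $G$ to be a two-vertex graph, hence $2K_1$ or $K_2\in\mbox{BP1}$, both excluded), so by the localized observation $\bar G[V_1]=\bar G-v$ is disconnected and $v$ is a cut vertex of $\bar G$. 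Otherwise both parts have at least two vertices, so both $\bar G[V_1]$ and $\bar G-V_1=\bar G[V_2]$ are disconnected; thus $X=V_1$ is a vertex cut of $\bar G$ with $\bar G[X]$ disconnected, i.e.\ a disconnected vertex cut.

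For the converse, assume $\bar G$ is connected and has a cut vertex or a disconnected vertex cut. Connectivity of $\bar G$ already yields $G\notin\mbox{BP1}$ (and $G\neq K_1$) by Lemma~\ref{lemma-bp1}, so it remains to exhibit a partition of $V(G)$ into two biclique parts. If $v$ is a cut vertex, take $V_1=V(G)\setminus\{v\}$ and $V_2=\{v\}$: then $\bar G[V_1]=\bar G-v$ is disconnected and $\{v\}$ is a singleton, so each part is a biclique part. If $X$ is a disconnected vertex cut, take $V_1=X$ and $V_2=V(G)\setminus X$: then $\bar G[V_1]=\bar G[X]$ is disconnected by hypothesis and $\bar G[V_2]=\bar G-X$ is disconnected since $X$ is a vertex cut, so again both parts are biclique parts. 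Either way $G\in\mbox{BP2}$.

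I expect the only real friction to be bookkeeping around the degenerate small cases rather than any deep idea: one must notice that covering bicliques are not required to be induced, so the per-part condition is merely that $\bar G[P]$ be disconnected, and one must track why $2K_1$ is precisely the exception the hypothesis must bar, since its complement $K_2$ is connected yet has neither a cut vertex nor a disconnected vertex cut even though $2K_1\in\mbox{BP2}\setminus\mbox{BP1}$. Making the singleton-versus-large-part dichotomy line up with the cut-vertex-versus-disconnected-vertex-cut dichotomy, and checking nonemptiness of all parts, is where I would be most careful.
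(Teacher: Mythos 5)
Your proposal is correct and follows essentially the same route as the paper: both directions rest on applying Lemma~\ref{lemma-bp1} to the two parts of a witnessing partition, with the singleton-part case yielding a cut vertex of $\bar G$ and the two-large-parts case yielding a disconnected vertex cut, and conversely. Your treatment is in fact slightly more careful than the paper's about the degenerate cases (nonemptiness of parts, the two-vertex graphs, and why $2K_1$ is the lone exception).
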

\begin{proof}
Let $G$ be a graph such that $G = 2K_1$ or $\bar{G}$ is connected but has a cut vertex or a disconnected vertex cut. Since $G = 2K_1 \in \mbox{BP2}\setminus \mbox{BP1}$, we shall assume that $G \neq 2K_1$ and that $\bar{G}$ is connected. 
Then, clearly $G\notin\mbox{BP1}$ by Lemma~\ref{lemma-bp1}.

If $\bar{G}$ has a cut vertex,  say $v$, then $\overline{G - v} = \bar{G} - v$ is disconnected and therefore, by Lemma~\ref{lemma-bp1}, $G-v$ belongs to $\mbox{BP1}$. So, we conclude that
$G\in \mbox{BP2}\setminus \mbox{BP1}$.

If $\bar{G}$ has a disconnected vertex cut $C$, i.e., $C$ is a vertex cut of $\bar{G}$ such that both $\bar{G}[C]$  and $\bar{G}[V(G)\setminus C]$ are disconnected, then both $G[C]$ and  $G[V(G)\setminus C]$ are in {BP1} by Lemma~\ref{lemma-bp1}.  So, we again conclude that $G\in\mbox{BP2}\setminus \mbox{BP1}$.

Conversely, suppose that  $G\in \mbox{BP2}\setminus\mbox{BP1}$ and is not equal to $2K1$. Then $\bar{G}$ is necessarily connected; otherwise $G\in\mbox{BP1}$ by Lemma~\ref{lemma-bp1}. 

If $G$ has a two biclique partition with one of the parts as a single vertex, say $v$, then $G-v$ can be covered with one biclique which implies that $\bar{G} - v$ is disconnected, where we started with a $\bar{G}$ that is connected. Therefore $v$ must be a cut vertex of $\bar{G}$.

If $G$ allows a two biclique partition where neither of the bicliques is a single vertex, then $\bar{G}$ must be partitionable into two sets $A$ and $B$ such that both $A$ and $B$ have at least
two elements each and $\bar{G}[A]$ and $\bar{G}[B]$ are disconnected. But $\bar{G} = \bar{G}[A\cup B]$ is connected. Therefore, it must be that  $A$ (as well as $B$) is a disconnected vertex cut of $A$. 
\end{proof}

Combining the preceding lemmas, we have the following.

\begin{lemma}
\label{bp2}
A graph $G$ that is not equal to $K_1$ or $2K_1$ is in \mbox{BP2} if and only if one of the following is true: (a) $\bar{G}$ is disconnected; (b) $\bar{G}$ is connected but has a cut vertex; 
(c) $\bar{G}$ is connected but has a disconnected vertex cut.
\end{lemma}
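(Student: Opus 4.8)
The plan is to write BP2 as the union $\mbox{BP2} = \mbox{BP1} \cup (\mbox{BP2} \setminus \mbox{BP1})$, where clearly $\mbox{BP1} \subseteq \mbox{BP2}$, and then to read off the membership condition for each piece directly from the two preceding lemmas. Because $G \neq K_1$ by hypothesis, Lemma~\ref{lemma-bp1} applies and says that $G \in \mbox{BP1}$ exactly when $\bar{G}$ is disconnected, which is condition (a). Because $G \neq 2K_1$ by hypothesis, Lemma~\ref{bp2n1} applies and says that $G \in \mbox{BP2} \setminus \mbox{BP1}$ exactly when $\bar{G}$ is connected and has either a cut vertex or a disconnected vertex cut, which are precisely conditions (b) and (c).

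For the forward implication I would assume $G \in \mbox{BP2}$, so that $G$ lies in one of the two pieces above. If $G \in \mbox{BP1}$ then (a) holds by Lemma~\ref{lemma-bp1}; if instead $G \in \mbox{BP2} \setminus \mbox{BP1}$ then (b) or (c) holds by Lemma~\ref{bp2n1}. For the converse I would note that (a) forces $G \in \mbox{BP1} \subseteq \mbox{BP2}$ via Lemma~\ref{lemma-bp1}, whereas either of (b) and (c) forces $G \in \mbox{BP2} \setminus \mbox{BP1} \subseteq \mbox{BP2}$ via Lemma~\ref{bp2n1}; in all cases $G \in \mbox{BP2}$.

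The argument is essentially bookkeeping, so the only point demanding care---and the nearest thing to an obstacle---is matching the excluded graphs: discarding $G = K_1$ is exactly what licenses the use of Lemma~\ref{lemma-bp1}, and discarding $G = 2K_1$ is exactly what licenses the use of Lemma~\ref{bp2n1}. It is also worth observing that the decomposition $\mbox{BP2} = \mbox{BP1} \cup (\mbox{BP2} \setminus \mbox{BP1})$ aligns cleanly with the trichotomy: condition (a) captures the BP1 part while conditions (b) and (c) capture the $\mbox{BP2} \setminus \mbox{BP1}$ part, so the three conditions together cover all of BP2 without leaving a gap.
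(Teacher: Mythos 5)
Your proposal is correct and matches the paper exactly: the paper presents Lemma~\ref{bp2} with no written proof beyond the remark that it follows by ``combining the preceding lemmas,'' which is precisely the decomposition $\mbox{BP2} = \mbox{BP1} \cup (\mbox{BP2}\setminus\mbox{BP1})$ you carry out, with the exclusions of $K_1$ and $2K_1$ licensing Lemmas~\ref{lemma-bp1} and~\ref{bp2n1} respectively.
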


Consequently, we have the following lemma for graphs not in {BP2}.

\begin{lemma}
\label{nbp2}
A graph $G$ on $n \ge 3$ vertices is in $\overline{\mbox{BP2}}$ if and only if $\bar{G}$ is connected, is free of cut vertices, and has all vertex cuts (if any) connected.
\end{lemma}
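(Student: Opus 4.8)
The plan is to derive Lemma~\ref{nbp2} directly from Lemma~\ref{bp2} by negating both sides, after first clearing away the two exceptional graphs. Since $n \ge 3$, the graph $G$ is neither $K_1$ (one vertex) nor $2K_1$ (two vertices), so the hypothesis of Lemma~\ref{bp2} is satisfied and that lemma applies to $G$ verbatim. It tells us that $G \in \mbox{BP2}$ if and only if at least one of the conditions (a), (b), (c) holds; hence $G \in \overline{\mbox{BP2}}$ if and only if all three conditions fail at once, and the task reduces to simplifying that simultaneous failure into the form stated.

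First I would write out the three conditions in explicit logical form, being careful that (b) and (c) each silently carry the clause ``$\bar{G}$ is connected''. Thus (a) reads ``$\bar{G}$ disconnected'', (b) reads ``$\bar{G}$ connected and $\bar{G}$ has a cut vertex'', and (c) reads ``$\bar{G}$ connected and $\bar{G}$ has a disconnected vertex cut''. Negating their disjunction gives the conjunction of the three negations. The key is to process the common connectivity clause first: $\neg$(a) forces $\bar{G}$ to be connected, and under this standing assumption $\neg$(b) collapses to ``$\bar{G}$ has no cut vertex'' while $\neg$(c) collapses to ``$\bar{G}$ has no disconnected vertex cut''. Collecting the surviving clauses leaves exactly: $\bar{G}$ connected, $\bar{G}$ free of cut vertices, and $\bar{G}$ having no disconnected vertex cut.

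The last step is purely a matter of terminology: ``no disconnected vertex cut'' is the same assertion as ``all vertex cuts (if any) connected''. Indeed, by the definitions recorded earlier a vertex cut is any set $X$ whose removal disconnects $\bar{G}$, and such a set is called disconnected precisely when $\bar{G}[X]$ is itself disconnected; so the nonexistence of a disconnected vertex cut is identical to the statement that every vertex cut that happens to exist induces a connected subgraph. This matches the claimed characterization. I do not expect a genuine obstacle here, as the whole argument is a De Morgan expansion riding on Lemma~\ref{bp2}; the only point requiring care is that the connectivity clause shared by (b) and (c) be handled before the other two clauses, so that the negations of (b) and (c) can be simplified under the established assumption that $\bar{G}$ is connected.
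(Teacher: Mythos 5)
Your proposal is correct and follows exactly the route the paper intends: the paper states Lemma~\ref{nbp2} without proof as an immediate consequence of Lemma~\ref{bp2}, and your De Morgan expansion (noting that $n\ge 3$ rules out $K_1$ and $2K_1$, and simplifying $\neg$(b) and $\neg$(c) under the connectivity forced by $\neg$(a)) is precisely the omitted verification.
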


The corollary below follows trivially from the lemma. 

\begin{corollary}
\label{cor-main}
Let $G$ be a graph in $\overline{\mbox{BP2}}$. Then the following are true for the complement graph $\bar{G}$. 
\begin{enumerate}
\item The neighbours of any vertex of $\bar{G}$  induces a connected subgraph of $\bar{G}$ and this subgraph has at least two vertices.
\item From any vertex of $\bar{G}$, all other vertices are at most at a distance of two. 
\item Any nonadjacent pair of vertices of $\bar{G}$ have a common neighbour in $\bar{G}$.
\end{enumerate}
\end{corollary}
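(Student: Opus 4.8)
The plan is to derive each of the three properties directly from the characterization of $\overline{\mbox{BP2}}$ provided by Lemma~\ref{nbp2}, which tells us that for $G\in\overline{\mbox{BP2}}$ the complement $\bar{G}$ is connected, has no cut vertex, and has every vertex cut connected. All three claims are statements about $\bar{G}$, so I would work entirely inside $\bar{G}$ and translate each structural hypothesis into the desired local property. The overall strategy is to prove the three items essentially independently, though item~3 will follow quickly once item~2 is in hand.

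\medskip
\noindent\textbf{Item 1.} First I would fix an arbitrary vertex $v$ of $\bar{G}$ and argue that $N_{\bar{G}}(v)$ induces a connected subgraph. The natural approach is to consider the set $N_{\bar{G}}(v)$ as a candidate vertex cut. Observe that since $\bar{G}$ has no cut vertex and is connected on $n\ge 3$ vertices, $v$ cannot have degree $0$, and one checks that $N_{\bar{G}}(v)$ separates $v$ from the rest of the graph whenever $v$ is not adjacent to every other vertex. The key move: if $N_{\bar{G}}(v)$ were disconnected, then either $N_{\bar{G}}(v)$ is a genuine vertex cut (contradicting that all vertex cuts of $\bar{G}$ are connected) or $v$ is adjacent to all other vertices, in which case I must handle the degenerate case separately. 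For the ``at least two vertices'' claim, I would note that a degree-one vertex's single neighbour would be a cut vertex of $\bar{G}$ (since $\bar{G}$ is connected with $n\ge 3$), contradicting the cut-vertex-free hypothesis; hence every vertex has degree at least two.

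\medskip
\noindent\textbf{Item 2.} To show every vertex is within distance two of $v$, I would argue by contradiction: suppose some vertex $w$ is at distance at least three from $v$. Then $N_{\bar{G}}(v)$ separates $v$ (together with its neighbourhood) from $w$, so $N_{\bar{G}}(v)$ is a vertex cut. By Lemma~\ref{nbp2} this cut must be connected, which I would combine with item~1; but the real content is that distance $\ge 3$ forces the set of vertices reachable ``beyond'' $N_{\bar{G}}(v)$ to be nonempty, making $N_{\bar{G}}(v)$ a disconnecting set whose removal isolates $v$ on one side, and I would extract a contradiction with the absence of cut vertices or with connectivity of cuts.

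\medskip
\noindent\textbf{Item 3.} Given two nonadjacent vertices $x,y$ of $\bar{G}$, item~2 already guarantees that the distance from $x$ to $y$ is at most two; since they are nonadjacent the distance is exactly two, which means there is an intermediate vertex adjacent to both, i.e.\ a common neighbour. This is the shortest of the three and falls out immediately.

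\medskip
\noindent The main obstacle I anticipate is the careful case analysis in items~1 and~2 around the \emph{degenerate situations} --- when a vertex is adjacent to all others, or when $N_{\bar{G}}(v)$ fails to be a proper vertex cut because its removal leaves nothing behind. In those cases the hypotheses of Lemma~\ref{nbp2} do not directly apply (there is no vertex to cut off), so I would need to invoke the no-cut-vertex condition and the bound $n\ge 3$ to rule them out or dispatch them directly, rather than relying on the connected-vertex-cut condition alone.
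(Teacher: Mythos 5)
The paper offers no written proof of this corollary (it is declared to follow ``trivially'' from Lemma~\ref{nbp2}), so the only question is whether your derivation actually goes through. Your treatment of item~1 is sound: $N_{\bar G}(v)$ is a vertex cut whenever $N_{\bar G}[v]\neq V$ (its removal isolates $v$ while leaving something else behind), hence it must induce a connected subgraph; in the degenerate case $N_{\bar G}[v]=V$ disconnectedness of $\bar G[N_{\bar G}(v)]=\bar G-v$ would make $v$ a cut vertex; and a degree-one vertex would make its unique neighbour a cut vertex since $n\ge 3$. Item~3 from item~2 is immediate.

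The gap is in item~2. You propose to take $C=N_{\bar G}(v)$ as the offending cut and ``extract a contradiction with the absence of cut vertices or with connectivity of cuts,'' but no contradiction arises from that set: $N_{\bar G}(v)$ is indeed a vertex cut isolating $v$, yet by your own item~1 it induces a \emph{connected} subgraph, which is perfectly consistent with Lemma~\ref{nbp2} whether the far side contains vertices at distance $2$ or at distance $3$. The presence of a vertex $w$ with $d(v,w)\ge 3$ changes nothing about $N_{\bar G}(v)$ itself. To get a contradiction you must exhibit a \emph{disconnected} vertex cut, and the natural one is $C=N_{\bar G}(v)\cup\{w\}$: since $d(v,w)\ge 3$, $w$ has no neighbour in $N_{\bar G}(v)$ and is not adjacent to $v$, so $\bar G[C]$ is disconnected ($w$ is isolated in it); and $\bar G - C$ is disconnected because $v$ is isolated there while $V\setminus N_{\bar G}[v]\setminus\{w\}$ is nonempty (if it were empty, $w$ would be an isolated vertex of the connected graph $\bar G$, impossible). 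Equivalently one may take $C=N_{\bar G}(v)\cup D_{\ge 3}$ where $D_{\ge 3}$ is the set of vertices at distance at least $3$ from $v$; there are no edges between the two parts of $C$. Without some such augmented cut, your sketch of item~2 does not close.
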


We close the section with a definition that encapsulates an important notion that is central to our discussion.

\begin{definition}
Let ${\bf F}$ be a family of graphs and let $G\in \mbox{\bf F}$. Let $\pi$ be a permutation of a set $A\subseteq V(G)$ with $|A| = k$. Then $\pi$ is said to be {\em safe for {\bf F}} if each of $G_0, G_1, G_2, \ldots, G_k\in \mbox{\bf F}$, where $G_i$ is the graph obtained from $G$ by deleting all the vertices in a prefix of length $i$ of $\pi$ for each $0\le i\le k$. 
\end{definition}

\section{Graphs of $\mbox{BP2}\setminus \mbox{BP1}$}
We show that from any graph $G$ in $\mbox{BP2}\setminus\mbox{BP1}$, by repeated deletion of zero or more vertices, we eventually and {\em inescapably} end up with a graph $G'$ in $\mbox{BP2}\setminus\mbox{BP1}$ that admits  a partition into a star and a biclique, without ever leaving $\mbox{BP2}\setminus\mbox{BP1}$ in the process. But we begin by proving the following Theorem. 

\begin{theorem}
\label{star-biclique}
Let $G$ be a graph in $\mbox{BP2}\setminus \mbox{BP1}$. Then we can decide whether $G$ allows a star-biclique partition in polynomial time. 
\end{theorem}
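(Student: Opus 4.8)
The plan is to decide the existence of a star--biclique partition by a bounded enumeration followed by a reduction to 2-SAT. We seek a partition $V(G)=S\cup B$ with $G[S]$ a star and $G[B]$ a biclique, and since $G\in\text{BP2}\setminus\text{BP1}$ the graph $G$ is not itself a single biclique, so $B\neq\emptyset$ in any such partition and $\bar G$ is connected by Lemma~\ref{bp2n1}. The only configuration that the seed-based reduction below cannot reach is the one in which the biclique part is a single vertex ($|B|=1$), since then $B$ has no two vertices on opposite sides to seed from; I would dispose of this case directly by checking, for each ordered pair $(c,b)$ of vertices, whether $G-b$ is a star centred at $c$. The companion degenerate case $|S|=1$ needs no special treatment, as the main reduction simply returns the empty leaf set.

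For the main case I would guess a constant-size seed that pins down the coarse structure: the centre $c$ of the star, together with one vertex $p$ from each of the two sides $B_1,B_2$ of the biclique, so that $p\sim q$ in $G$; there are only $O(n^3)$ such triples $(c,p,q)$. The value of the seed is that in a biclique $K_{m,n}$ with $m,n\ge 1$ two vertices lie on the same side exactly when they are non-adjacent and on opposite sides exactly when they are adjacent. Hence, once $p\in B_1$ and $q\in B_2$ are fixed, the adjacencies of any other vertex $v$ to $p$ and to $q$ already restrict its role: a vertex adjacent to both $p$ and $q$, or to neither, cannot lie in $B$ at all and is forced into the star (so it must be adjacent to $c$, else the triple is rejected), while a vertex with exactly one of these two adjacencies has only two admissible roles, namely one fixed side of the biclique or a leaf of the star. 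Thus every vertex other than $c,p,q$ carries at most two admissible roles.

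Given the seed, I would introduce one Boolean variable per remaining vertex recording which of its two admissible roles it takes, and express the partition requirements as a 2-SAT instance. Each requirement is binary: $B_1$ and $B_2$ must each be independent (adjacent vertices may not share a side), every $B_1$-vertex must be adjacent to every $B_2$-vertex (non-adjacent vertices may not be split across the two sides), the leaf set must be independent (adjacent vertices may not both be leaves), and a vertex may be a leaf only if it is adjacent to $c$. Each of these is a disjunction of at most two literals of the form ``$v$ takes role $X$,'' so the whole system is a 2-SAT formula with $O(n)$ variables and $O(n^2)$ clauses, solvable in polynomial time. The graph admits a star--biclique partition iff some triple yields a satisfiable formula: a satisfying assignment reconstructs $S=\{c\}\cup L$ and $B=B_1\cup B_2$ with all the required adjacencies, and conversely any star--biclique partition with $|B|\ge 2$ supplies a feasible triple and a satisfying assignment.

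The main obstacle I anticipate is not the enumeration but the faithful encoding. I must verify that the seed really does cut the role of every vertex down to at most two possibilities, so that a single Boolean variable suffices, and that all four families of constraints are genuinely binary, so that the instance stays within 2-SAT rather than escaping to general SAT. Equally, care is needed so that the degenerate case $|B|=1$, together with the excluded small graphs $K_1$ and $2K_1$, is handled separately and no valid partition is overlooked; here the hypothesis $G\in\text{BP2}\setminus\text{BP1}$ is exactly what lets me assume $B\neq\emptyset$ and $\bar G$ connected, keeping these boundary cases finite and explicit.
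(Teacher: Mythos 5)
Your reduction solves the wrong problem: it decides whether $G$ splits into an \emph{induced} star and an \emph{induced} complete bipartite graph, whereas the bicliques of this paper are spanning subgraphs of the parts. A part $P$ is a biclique here exactly when $P$ is a single vertex or $\bar{G}[P]$ is disconnected (Lemma~\ref{lemma-bp1}): $P$ splits as $P_1\cup P_2$ with every $P_1$--$P_2$ edge present in $G$, but with \emph{no} constraint on edges inside $P_1$, inside $P_2$, or among the leaves of a star. (This is visible in the paper's own proof of this theorem, where $G[N[v]]$ is accepted as a star part even though $N(v)$ may contain edges.) Consequently the three independence clauses you impose --- $B_1$ independent, $B_2$ independent, leaf set independent --- are spurious and make your algorithm reject graphs it must accept. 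Concretely, $G=2K_3$ lies in $\mbox{BP2}\setminus\mbox{BP1}$ (its complement is $K_{3,3}$) and admits a star--biclique partition in the paper's sense (one triangle is a star centred at any of its vertices, the other triangle is a biclique), but it has no induced star--biclique partition: $2K_3$ has no induced $P_4$-free... more simply, no induced $P_3$ and no induced $C_4$, so every induced star and every induced complete bipartite subgraph has at most two vertices, covering at most four of the six vertices. Every seed triple in your enumeration is therefore rejected, as is your $|B|=1$ check.

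The error is not repaired by simply deleting the independence clauses, because then the seed no longer cuts each vertex down to two roles: a vertex adjacent to both $p$ and $q$ may legitimately sit in $B_1$, in $B_2$, or among the leaves, so its domain is ternary and the cross-adjacency constraints are no longer 2-clauses over one Boolean variable per vertex. You would then be facing, in effect, a list version of the $2K_2$-partition problem, whose polynomial solvability cannot be taken for granted --- it is close to the very question the paper is about. The paper avoids all of this by enumerating only the star centre $v$ and reading the answer off the complement: after dispatching the cases where $G$ is disconnected, $G-v\in\mbox{BP1}$, or $G-N_G[v]\in\mbox{BP1}$, it takes $A=N_G(v)$, $B=V(G)\setminus N_G[v]$, and the set $S$ of vertices of $A$ having a $\bar{G}$-neighbour in $B$; it argues that a star--biclique partition centred at $v$ exists if and only if $S\neq A$, since $S$ is then a vertex cut of $\bar{G}-v$ and $\{v\}\cup S$ yields the star. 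To salvage your approach you would need either to adopt such a connectivity argument or to find an encoding faithful to the covering semantics.
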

\begin{proof}
Let $G$ be a graph in $\mbox{BP2}\setminus \mbox{BP1}$. Then for each vertex $v$ of $G$, we simply check whether $G$ admits a partition into a star biclique {\em centered at $v$} and another biclique. We do this as follows by fixing $v$ for a particular vertex of $G$. 

If $G$ is disconnected, then there must be exactly two components. We simply check if at least one of the components is a star with $v$ as the center; this can be done in polynomial time.  So, we shall assume that $G$  is connected. 

If ${G-v}\in \mbox{BP1}$, then $v$ and $G-v$ provides a star-biclique partition of $G$. If $G - N[v] \in \mbox{BP1}$, then $G[N[v]]$ and $G-N[v]$ provides a star-biclique partition of $G$. 

If neither is the case, we decide in polynomial time  whether there is a proper subset $S\neq \emptyset$  of $N_G(v)$ such that deleting $\{v\}$ and $S$ from $G$ results in a graph in {BP1}. For if there is such an $S$, then $G[\{v\}\cup S]$ and $G -v - S$ provides a star-biclique partition.

Since neither $G-v$ nor $G - N_G[v]$ is in $\mbox{BP1}$, both $G -v$ and $G - N_G[v]$ contain at least two vertices and the complement graphs $\bar{G}-v$ and $\bar{G} - N_G[v]$ are connected. Let $A = N_G(v)$ and let $B = V(G) \setminus N_G[v]$. Clearly, $A\cup B = V(G)\setminus \{v\}$.

Consider the complement graph $\bar{G}-v$. Let $S$ be the set of all vertices $u$ in $A$ such that $u$ is adjacent to some vertex in $B$ in this complement graph. We note that this $S$ can be constructed in polynomial time. If $S = A$, [i.e., if each vertex of $A$ is adjacent to a vertex in the connected graph $\bar{G} - N_G[v]$], then deleting no subset of $A$ can disconnect $\bar{G}-v$; we shall therefore conclude that it is impossible to partition $G$ into a star centered at $v$ and a biclique. 

If $S\neq A$, then $S$ is a vertex cut for $\bar{G}-v$ and $\{v\}\cup S$ is a disconnected vertex cut for $\bar{G}$ with $v$ as a component (No vertex in $S\subseteq A = N_G(v)$ is adjacent to $v$ in $\bar{G}$.). In this case, we see that $G[\{v\}\cup S]$ and $G - v - S$ provide a star-biclique partition of $G$.  
\end{proof}

We have the following interesting result about graphs of $\mbox{BP2}\setminus \mbox{BP1}$ that do not admit a star-biclique partition.

\begin{lemma}
\label{bp2n1-deletion}
Let $G$ be a graph in $\mbox{BP2}\setminus \mbox{BP1}$ such that it does not admit any star-biclique partition. Then for any vertex $v$ of $G$, $G - v$ is also a graph in BP2$\setminus$BP1.
\end{lemma}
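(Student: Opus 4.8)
The plan is to pass to the complement graph and squeeze two structural facts out of the hypothesis that $G$ admits no star-biclique partition. First note that $G$ has at least three vertices: on at most two vertices the only member of $\mbox{BP2}\setminus\mbox{BP1}$ is $2K_1$, which splits as $K_1$ and $K_1$ and hence does admit a star-biclique partition, so this case is excluded. Since $G\notin\mbox{BP1}$ and $G\neq K_1$, Lemma~\ref{lemma-bp1} gives that $\bar{G}$ is connected. I would then show $\bar{G}$ has \emph{no cut vertex}: if some $u$ were a cut vertex, then $\bar{G}-u$ would be disconnected, so $G-u\in\mbox{BP1}$ by Lemma~\ref{lemma-bp1} (and $G-u\neq K_1$ since $G$ has at least three vertices), making $\{u\}$ together with $G-u$ a star-biclique partition of $G$ --- a contradiction. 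The second fact is that in \emph{every} partition of $G$ into two bicliques $A$ and $B$, neither $G[A]$ nor $G[B]$ is a star, since a star part would directly display a star-biclique partition; consequently each part is a complete bipartite graph $K_{m,n}$ with $m,n\ge 2$.

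Now fix an arbitrary vertex $v$. For $G-v\in\mbox{BP2}$, take any two-biclique partition $(A,B)$ of $G$ and assume $v\in A$ without loss of generality. Because $G[A]=K_{m,n}$ with $m,n\ge 2$, deleting $v$ leaves a complete bipartite graph ($K_{m-1,n}$ or $K_{m,n-1}$) whose two sides are both nonempty, which is again a biclique; as $G[B]$ is unaffected, $(A\setminus\{v\},B)$ is a two-biclique partition of $G-v$, so $G-v\in\mbox{BP2}$. This is where the no-star hypothesis earns its keep: it forces each side of every biclique part to have size at least two, so that a single deletion cannot collapse a part into an edgeless graph $kK_1$, which is not a biclique for $k\ge 2$.

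For $G-v\notin\mbox{BP1}$, I would invoke the absence of cut vertices established above: since $\bar{G}$ is connected and cut-vertex-free, $\overline{G-v}=\bar{G}-v$ is still connected, whence $G-v\notin\mbox{BP1}$ by Lemma~\ref{lemma-bp1} (using $G-v\neq K_1$). Together with the previous paragraph this yields $G-v\in\mbox{BP2}\setminus\mbox{BP1}$, as required.

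The step I expect to be the real crux is not any single calculation but ensuring the hypothesis is used in both directions at once: the no-star assumption must simultaneously keep $\bar{G}$ free of cut vertices --- which is what prevents $G-v$ from dropping into $\mbox{BP1}$ --- and keep both sides of each biclique part large --- which is what prevents $G-v$ from leaving $\mbox{BP2}$. If either use were unavailable, for instance if $G$ had a partition with a star part, a single vertex deletion could push $G-v$ out of $\mbox{BP2}\setminus\mbox{BP1}$, so the argument genuinely relies on excluding star-biclique partitions rather than merely on $G\in\mbox{BP2}\setminus\mbox{BP1}$.
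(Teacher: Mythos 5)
Your proposal is correct and follows essentially the same two observations as the paper's proof: the absence of a star part forces every biclique part to have both sides of size at least two (so deletion preserves membership in BP2), and $G-v\in\mbox{BP1}$ would immediately yield the forbidden star-biclique partition $\{v\}$, $G-v$. Your detour through cut vertices of $\bar{G}$ is just this second observation restated via Lemma~\ref{lemma-bp1}, so the argument is the same in substance, merely more detailed.
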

\begin{proof}
Suppose that $G$ does not allow any two biclique partition for which one of the bicliques is a star. 

Then each biclique in every two biclique partition of $G$ has on each side at least two vertices. 
So, deleting a vertex $v$ from $G$ does still retain a two biclique structure in $G - v$; and so $G - v \in \mbox{BP2}$.

Since assuming that $G - v \in \mbox{BP1}$ implies that $G$ admits a star-biclique partition, namely $v$ and $G-v$, we conclude that $G - v \in \mbox{BP2}\setminus \mbox{BP1}$.
\end{proof}

The following theorem is a corollary of the above lemma.

\begin{theorem}
\label{thm-bp2n1-proof}
For each graph $G$ in $\mbox{BP2}\setminus\mbox{BP1}$, there is an integer $l = l(G)\ge 0$ such that any permutation $\pi$ of any subset of $l$ vertices of $G$ is safe for $\mbox{BP2}\setminus\mbox{BP1}$. Moreover, none of the associated graphs $G_0, G_1, G_2, \ldots, G_{l-1}$ allows a star-biclique partition whereas the graph $G_l$ does.
\end{theorem}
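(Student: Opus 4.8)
The plan is to define $l = l(G)$ as the least integer $l \ge 0$ for which some set $S$ of $l$ vertices can be removed so that $G - S$ admits a star-biclique partition, and then to read off every assertion from this definition together with Lemma~\ref{bp2n1-deletion}. First I would check that $l$ is well defined, i.e.\ that a star-biclique partition is always eventually reachable. The key observation here is that if a graph $H \in \mbox{BP2}\setminus\mbox{BP1}$ has \emph{no} star-biclique partition, then in every two-biclique partition of $H$ both parts are bicliques $K_{m,n}$ with $m,n \ge 2$ (a $K_1$ or a $K_{1,n}$ would be a star), so, since at least one such partition exists, $H$ has at least eight vertices. Hence, starting from $G$ and repeatedly deleting an arbitrary vertex while the current graph has no star-biclique partition, Lemma~\ref{bp2n1-deletion} keeps us inside $\mbox{BP2}\setminus\mbox{BP1}$ while the vertex count strictly drops; this cannot continue below eight vertices, so a graph with a star-biclique partition is reached. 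Thus the family of sets $S$ with $G - S$ admitting a star-biclique partition is nonempty and $l$ exists, with $l = 0$ exactly when $G$ itself has such a partition.

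Next I would prove the safety assertion, namely that every $G - S$ with $|S| \le l$ lies in $\mbox{BP2}\setminus\mbox{BP1}$; this gives at once that any permutation of any $l$-subset is safe. I would argue by induction on $i = |S|$ from $0$ to $l$. The base case is the hypothesis $G \in \mbox{BP2}\setminus\mbox{BP1}$. For the step, write $S = S' \cup \{v\}$ with $|S'| = i - 1 < l$; by minimality of $l$ the graph $G - S'$ has no star-biclique partition, and by the induction hypothesis it lies in $\mbox{BP2}\setminus\mbox{BP1}$, so Lemma~\ref{bp2n1-deletion} yields $G - S = (G - S') - v \in \mbox{BP2}\setminus\mbox{BP1}$.

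For the ``moreover'' part I would again invoke minimality of $l$: for every $S$ with $|S| < l$ no star-biclique partition of $G - S$ exists, while there is at least one $l$-subset $S_0$ with $G - S_0$ admitting one. Ordering $S_0$ as the permutation $\pi$, the prefix graphs $G_0, G_1, \ldots, G_{l-1}$ are precisely the graphs $G - S$ for the proper prefixes $S$ of $S_0$, all of size $< l$, none of which admits a star-biclique partition, whereas $G_l = G - S_0$ does; and by the previous paragraph all of $G_0, \ldots, G_l$ lie in $\mbox{BP2}\setminus\mbox{BP1}$.

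The main obstacle, and the point deserving the most care, is the well-definedness in the first step. Admitting a star-biclique partition is not monotone under vertex deletion: removing a vertex from a graph without such a partition may or may not create one, so there is no single deletion order along which the property appears ``for the first time'' at a canonical depth. This is exactly why $l$ must be pinned down as the \emph{least} number of deletions reaching a star-biclique partition; minimality is then what forces every shorter prefix to remain free of such a partition, while the eight-vertex bound above is what guarantees that \emph{some} deletion order reaches one at all. Verifying that bound, equivalently that every sufficiently small graph of $\mbox{BP2}\setminus\mbox{BP1}$ admits a star-biclique partition, is the only genuinely graph-theoretic input needed beyond Lemma~\ref{bp2n1-deletion}.
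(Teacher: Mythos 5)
The paper itself offers no argument for this theorem beyond calling it a corollary of Lemma~\ref{bp2n1-deletion}, and your proof is the natural and essentially correct way to extract it from that lemma. In particular you supply the one ingredient the paper leaves entirely implicit, namely termination: in a graph of $\mbox{BP2}\setminus\mbox{BP1}$ with no star-biclique partition, both parts of every two-biclique partition must be of the form $K_{m,n}$ with $m,n\ge 2$, hence the graph has at least eight vertices, so the deletion process governed by Lemma~\ref{bp2n1-deletion} cannot avoid producing a star-biclique partition forever. Defining $l$ as the minimum number of deletions after which some star-biclique partition appears, and running the induction through the lemma, correctly gives that every permutation of every $l$-subset is safe and that none of the prefix graphs $G_0,\ldots,G_{l-1}$ admits a star-biclique partition.

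The one point where you deliver less than the literal statement is the final clause ``whereas the graph $G_l$ does'': the theorem quantifies over \emph{any} permutation of \emph{any} $l$-subset, while you exhibit only one witness set $S_0$. This is not a repairable gap in your argument but a defect of the statement itself, since the universal version is false. Take $G$ to be the disjoint union of $K_{2,2}$ and $K_{3,3}$. It lies in $\mbox{BP2}\setminus\mbox{BP1}$ and admits no star-biclique partition: any part of a two-biclique partition that meets both components has a connected complement, so the only two-biclique partition is into the two components, neither of which has a dominating vertex. Deleting a vertex of the $K_{2,2}$ yields $K_{1,2}$ plus $K_{3,3}$, which does admit a star-biclique partition, forcing $l=1$; but deleting a vertex of the $K_{3,3}$ yields $K_{2,2}$ plus $K_{2,3}$, which does not. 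Hence no single $l$ satisfies both halves of the ``moreover'' clause for all permutations simultaneously. What you have proved --- safety of all $l$-subsets, star-freeness of all shorter prefixes, and one permutation reaching a star-biclique partition at step $l$ --- is the strongest correct reading, and it is also all that the later application in Theorem~\ref{verifier} actually uses, since there the operative fact is the per-sequence one: along any deletion order, a star-biclique partition is encountered before the graph can leave $\mbox{BP2}$.
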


\section{Graphs of $\overline{\mbox{BP2}}$}
The following theorem asserts that for any graph $G\in\overline{\mbox{BP2}}$, there is a careful order of deletion of vertices from $G$ so that each of the successively resulting subgraphs is in $\overline{\mbox{BP2}}$ and the last graph $H$ obtained is the smallest graph in $\overline{\mbox{BP2}}$, namely $3K_1 = \bar{K}_3$. 

\begin{theorem}
\label{thm-nbp2-proof}
Let $G$ be a graph in $\overline{\mbox{BP2}}$ on $n$ vertices. Then $G$ has a
permutation $\pi$ of $n-3$ vertices that is {\em safe} for $\overline{\mbox{BP2}}$. 
\end{theorem}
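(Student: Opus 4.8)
The plan is to derive the theorem from a single-vertex statement by induction on $n$. It suffices to prove the following claim: \emph{every graph $G\in\overline{\mbox{BP2}}$ with $n\ge 4$ vertices has a vertex $v$ such that $G-v\in\overline{\mbox{BP2}}$}. Granting this, I would pick such a vertex $v_1$, pass to $G-v_1\in\overline{\mbox{BP2}}$ (which has $n-1\ge 3$ vertices, so Lemma~\ref{nbp2} still applies), and repeat until only three vertices remain. Since, by Lemma~\ref{nbp2}, the only $3$-vertex graph in $\overline{\mbox{BP2}}$ is $3K_1$ (its complement $K_3$ is $2$-connected and admits no vertex cut at all, whereas every other $3$-vertex graph fails the conditions), the chain is forced to end at $3K_1$, and the sequence $v_1,\dots,v_{n-3}$ is exactly a permutation of $n-3$ vertices that is safe for $\overline{\mbox{BP2}}$.

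To prove the claim I would pass to the complement and set $H=\bar{G}$. By Lemma~\ref{nbp2}, $G\in\overline{\mbox{BP2}}$ is equivalent to $H$ being $2$-connected (connected and free of cut vertices) with \emph{every} vertex cut inducing a connected subgraph; Corollary~\ref{cor-main} moreover tells me that $H$ has diameter at most two and is locally connected, each neighbourhood inducing a connected subgraph on at least two vertices. Because $\overline{G-v}=H-v$, the task becomes: find $v$ so that $H-v$ is again $2$-connected with all vertex cuts connected. The engine of the argument is a lifting observation: if $H-v$ has a vertex cut $T$ with $H[T]$ disconnected, then $H-(T\cup\{v\})=(H-v)-T$ is disconnected, so $T\cup\{v\}$ is a vertex cut of $H$ and hence induces a connected subgraph, forcing $v$ to have neighbours in at least two components of $H[T]$; similarly, if $H-v$ has a cut vertex $w$, then $\{v,w\}$ is a $2$-cut of $H$ and, being connected, satisfies $v\sim w$. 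Thus every \emph{bad} choice of $v$ pins down a concrete cut of $H$ passing through $v$.

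I would then split on the connectivity $\kappa(H)$. If $\kappa(H)\ge 3$, then $\kappa(H-v)\ge 2$ for every $v$, so $H-v$ is automatically $2$-connected and only disconnected cuts need be avoided; here I would take a minimum (hence connected) cut $S$ with components $C_1,\dots,C_t$ of $H-S$, delete a carefully chosen vertex of a smallest component, and use the diameter-two and local-connectivity constraints together with the lifting observation to show that no disconnected cut can be created. If $\kappa(H)=2$, every $2$-cut $\{a,b\}$ is an edge; using diameter two I would first show that across such a cut all but possibly few vertices are adjacent to both $a$ and $b$, and then locate a removable vertex inside a leaf piece of the decomposition of $H$ along its $2$-cuts (a leaf of its $3$-block, or SPQR, tree) whose deletion neither disconnects $H$ nor produces a cut vertex, again invoking the lifting observation to rule out new disconnected cuts.

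The main obstacle is the case $\kappa(H)=2$: there, deleting a vertex can create a cut vertex, so I cannot delete arbitrarily and must genuinely exhibit a removable vertex in a peripheral piece of the $2$-cut structure. What makes this feasible is precisely the diameter-at-most-two and local-connectivity properties from Corollary~\ref{cor-main}: they forbid long, thin substructures such as induced cycles, whose presence is exactly what would obstruct a $2$-connectivity-preserving deletion, and they keep the pieces across a $2$-cut close to the cut. Marrying these constraints with the lifting observation so as to guarantee \emph{simultaneously} that $H-v$ stays $2$-connected and keeps all its cuts connected is the delicate part of the proof.
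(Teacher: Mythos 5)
Your reduction is sound as far as it goes: the theorem is equivalent to the claim that every $G\in\overline{\mbox{BP2}}$ on $n\ge 4$ vertices has \emph{some} vertex $v$ with $G-v\in\overline{\mbox{BP2}}$, iteration then forces the chain to terminate at $3K_1$, and your translation to the complement (find $v$ so that $\bar{G}-v$ stays $2$-connected with all vertex cuts connected) together with the lifting observation is correct. The problem is that this one-vertex claim carries essentially the entire content of the theorem, and you never prove it. In the case $\kappa(\bar{G})\ge 3$ you say you would ``delete a carefully chosen vertex of a smallest component'' of a minimum cut and ``show that no disconnected cut can be created,'' but no such vertex is exhibited and no such verification is carried out; in the case $\kappa(\bar{G})=2$ you explicitly label the required argument ``the delicate part'' and leave it open. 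The appeal to diameter two and local connectivity ``forbidding long, thin substructures'' is not a proof of anything; one would need to show, for the specific $v$ chosen, that \emph{no} subset $T$ of $V(\bar{G}-v)$ with $\bar{G}[T]$ disconnected can separate $\bar{G}-v$, and that is exactly the combinatorial difficulty the proposal defers. As written, the proposal is a restatement of the problem plus a plan of attack, not a proof.

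It is worth noting that the paper sidesteps the local ``deletable vertex'' question entirely. Rather than arguing vertex by vertex, it fixes a maximum-cardinality subset $A$ with $G[A]\in\mbox{BP2}$ (so that every induced subgraph of $G$ properly containing $A$ is automatically in $\overline{\mbox{BP2}}$ by maximality), fixes a partition $[B,C]$ of $A$ into two BP1 parts, and then writes down the whole permutation $\pi$ at once: first the vertices outside $A\cup\{v\}$, then $C$ down to one vertex $u$, then $B$ down to one vertex $w$, using nonadjacency of $v$ and $u$ to vertices of $B$ to keep each intermediate graph out of BP2. Safety of each prefix there comes from global maximality rather than from a connectivity analysis of the complement, which is why the paper never needs the $\kappa=2$ case analysis that blocks your route. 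If you want to salvage your approach, you must actually produce the vertex $v$ and verify both conditions on $\bar{G}-v$; until then there is a genuine gap.
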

\begin{proof}
Let $G=(V,E)$ be a graph in $\overline{\mbox{BP2}}$ on $n$ vertices. We will construct a permutation $\pi = \langle v_1, v_2, \ldots, v_{n-3} \rangle$ of $n-3$ vertices of $G$ that is {\em safe} for $\overline{\mbox{BP2}}$: deleting vertices in any prefix of $\pi$ from $G$ leaves behind a graph in $\overline{\mbox{BP2}}$.

Let $A$ be a subset of $V$ of largest cardinality such that the induced subgraph $G[A]\in \mbox{BP2}$. In fact, the maximality of $A$ implies that $G[A] \in \mbox{BP2} \setminus \mbox{BP1}$. Let $v\in V\setminus A$. Then $G[A\cup \{v\}]\in \overline{\mbox{BP2}}$. Clearly, deleting vertices in $V\setminus (A\cup \{v\})$ from $G$, in any order, can never result in a graph in BP2. We set $\pi'$ equal to some ordering of vertices in  $V\setminus (A \cup \{v\})$.

For every partition  $[A_1, A_2]$ of $A$ such that both $G[A_1]$ and $G[A_2]$
are in BP1, we have at least one vertex in $A_1$ that is not adjacent to 
$v$ and at least one vertex in $A_2$ that is not adjacent to $v$.
In fact, we have that $G[A_1\cup\{v\}]\notin\mbox{BP1}$ and that $G[A_2\cup\{v\}]\notin\mbox{BP1}$. For otherwise we will have that $G[A\cup \{v\}]\in {\mbox{BP2}}$.

Let $B$ be a subset of $A$ of largest cardinality such that both $G[B]$ and $G[C]$, where $C = A \setminus B$, are in BP1. Then it follows, from the maximality of $B$ that for each $c\in C$, there is at least one vertex $b\in B$ such that $c$ is not adjacent to $b$. From what we noted in the preceding paragraph it also follows that $v$ is not adjacent to some vertex in $B$ and to some vertex in $C$. 

We now delete all vertices in $C$ that are adjacent to $v$ in some order. It is clear that the sequence of successive graphs that are resulting are all in $\overline{\mbox{BP2}}$. We continue deleting the other vertices of $C$ except for one, say $u$, and note again that the successively resulting graphs are all in $\overline{\mbox{BP2}}$. Let $p''$ denote the sequence of vertices deleted in the order of deletion. Let $H = G[B\cup \{u\} \cup \{v\}]$ denote the final graph obtained. 

We note that vertices $v$ and $u$ are not adjacent in $H = G[B\cup \{u\} \cup \{v\}]$. Both $v$ and $u$ have nonadjacent vertices in $B$. Delete in some order
all the vertices in $B$ adjacent to $v$ or $u$ or both from $H$. When this is done, vertices $v$ and $u$ become isolated. We now continue deleting the other vertices of $B$ except for one, say $w$, in some order. Let $\pi'''$ be the sequence of vertices deleted. It is clear again that all the graphs obtained after each additional deletion are all in $\overline{\mbox{BP2}}$.

We now set $\pi = \pi'\cdot \pi''\cdot\pi'''$ and see that $\pi$ is a sequence of $n-3$ vertices of $G\in \overline{\mbox{BP2}}$ on $n$ vertices and that $\pi$ is safe for $\overline{\mbox{BP2}}$. 
\end{proof}

\section{Proving that  $\mbox{BP2}\in\mbox{coNP}$}
We establish that BP2 is in coNP by showing that $\overline{\mbox{BP2}}$ is in NP. We provide a polynomial time verifier that takes in as input a graph $G$ and a sequence $\pi$ of vertices of $G$. The verifier accepts the pair if and only if $G\in \overline{\mbox{BP2}}$ and $\pi$ is safe for $\overline{\mbox{BP2}}$ and is of length $n-3$, where $n = |V(G)|$. We know, from  Theorem~\ref{thm-nbp2-proof}, that such a proof exists for all graphs in $\overline{\mbox{BP2}}$.

\begin{theorem}
\label{verifier}
There is a polynomial time algorithm that inputs a pair $(G, \pi)$ of a graph $G$ and a sequence $\pi$ of vertices of $G$ and outputs {\sc accept} if and only if $G\in \overline{\mbox{BP2}}$ and $\pi$ is a longest permutation of vertices 
of $G$ that is safe for $\overline{\mbox{BP2}}$; it otherwise outputs {\sc reject}.   
\end{theorem}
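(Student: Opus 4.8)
The plan is to design the verifier so that it never has to decide the (apparently hard) disconnected-vertex-cut condition of Lemma~\ref{nbp2} from scratch; instead it certifies membership in $\overline{\mbox{BP2}}$ incrementally along $\pi$, reducing each step to the polynomial-time star-biclique test underlying Theorem~\ref{star-biclique}. On input $(G,\pi)$ with $n=|V(G)|$, I would first reject unless $\pi=\langle v_1,\dots,v_{n-3}\rangle$ is a sequence of $n-3$ distinct vertices of $G$ and the residual graph $G_{n-3}=G-\{v_1,\dots,v_{n-3}\}$ equals $3K_1$. Since $3K_1$ is the unique smallest graph in $\overline{\mbox{BP2}}$ (every graph on at most two vertices, and every three-vertex graph other than $\bar{K}_3$, lies in $\mbox{BP2}$), a safe permutation can delete at most $n-3$ vertices; hence a safe permutation of length exactly $n-3$ is automatically a \emph{longest} one, and by Theorem~\ref{thm-nbp2-proof} such a permutation exists whenever $G\in\overline{\mbox{BP2}}$. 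This reduces the task to checking that $\pi$ is safe, which I would do by processing the prefixes from the bottom up, maintaining the invariant that $G_i\in\overline{\mbox{BP2}}$.

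The heart of the argument is a local test for the inductive step. Writing $G_{i-1}=G_i+v$ with $v=v_i$, I claim that, \emph{assuming} $G_i\in\overline{\mbox{BP2}}$, we have $G_{i-1}\in\mbox{BP2}$ if and only if $G_{i-1}$ admits a two-biclique partition in which the biclique containing $v$ is a star centered at $v$ (the trivial star $K_1$ included). The forward direction is the substantive one: take any two-biclique partition of $G_{i-1}$ and examine the biclique ($K_1$ or some $K_{m,n}$) containing $v$. If $v$ forms a singleton part, deleting it leaves $G_i$ covered by the other biclique, so $G_i\in\mbox{BP1}$; if $v$ lies on a side of size at least two, deleting it leaves that biclique still a biclique and again covers $G_i$ with two bicliques, so $G_i\in\mbox{BP2}$. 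Either conclusion contradicts $G_i\in\overline{\mbox{BP2}}$, so the only surviving possibility is that $v$ is the sole vertex on its side, i.e.\ its biclique is a star centered at $v$. The converse is immediate.

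Consequently, verifying the step $G_i\rightsquigarrow G_{i-1}$ amounts to deciding whether $G_{i-1}$ has a two-biclique partition with a star centered at the distinguished vertex $v$, which is exactly the per-vertex subroutine analyzed in the proof of Theorem~\ref{star-biclique}, and which carries over to an arbitrary host graph: test whether $G_{i-1}-v$ or $G_{i-1}-N[v]$ lies in $\mbox{BP1}$ (Lemma~\ref{lemma-bp1}), and otherwise search, via the complement and a vertex-cut computation, for a nonempty proper $S\subseteq N(v)$ with $G_{i-1}-v-S\in\mbox{BP1}$. Each such test runs in polynomial time. The verifier performs it at each of the $n-3$ steps, rejecting as soon as a star-at-$v$ partition is found and accepting only if all steps survive, so the whole procedure is polynomial.

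Correctness then follows by induction along $\pi$. For soundness, if the base check $G_{n-3}=3K_1$ and all $n-3$ local tests pass, the equivalence of the second paragraph lifts $G_{n-3}\in\overline{\mbox{BP2}}$ step by step to $G_0=G\in\overline{\mbox{BP2}}$; hence no $\pi$ can be accepted when $G\in\mbox{BP2}$. For completeness, a graph $G\in\overline{\mbox{BP2}}$ has, by Theorem~\ref{thm-nbp2-proof}, a safe permutation of length $n-3$; along it every $G_i\in\overline{\mbox{BP2}}$, so no $G_{i-1}$ admits any two-biclique partition at all, let alone a star-at-$v$ one, and every test passes. I expect the main obstacle to be establishing the local equivalence cleanly---in particular making precise that, once $G_i$ is known to lie outside $\mbox{BP2}$, a star centered at the newly restored vertex is the \emph{only} route by which $G_{i-1}$ can enter $\mbox{BP2}$---since this is precisely what lets the verifier sidestep deciding the disconnected-vertex-cut condition directly.
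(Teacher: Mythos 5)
Your proposal is correct and follows essentially the same route as the paper: the certificate is the length-$(n-3)$ safe permutation of Theorem~\ref{thm-nbp2-proof}, the verifier walks along $\pi$ checking that the terminal graph is $3K_1$ and that no intermediate graph admits a star--biclique partition, and the per-step detection subroutine is exactly the one from Theorem~\ref{star-biclique}. The only (cosmetic) differences are that you run the correctness induction bottom-up and anchor the star at the specific restored vertex $v_i$, which amounts to restating Lemma~\ref{bp2n1-deletion} as a local ``if and only if'' at $v_i$, rather than using the paper's top-down argument that a graph in $\mbox{BP2}\setminus\mbox{BP1}$ must pass through a star--biclique-partitionable graph before it can leave $\mbox{BP2}$.
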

\begin{proof}
Consider the algorithm in Figure~1.  We argue that this algorithm provides a valid polynomial time verifier  for $\overline{\mbox{BP2}}$. It is clear, from Theorem~\ref{star-biclique}, that  the algorithm can run in polynomial time. We will just prove its correctness.

\begin{figure}[h]
\label{nbp2-verifier}
\hrule
Input: $(G, \pi)$

Output: {\sc accept / reject}
\hrule
\begin{enumerate}
\item[0.] If $G\in \mbox{BP1}$ or $\pi$ is {\em not} a permutation on $n-3$ 	vertices of the $n$ vertex graph $G$, return {\sc reject}.
 	Else {\bf repeat} Steps 1 to 3 below:
\item[1.] If $G$ admits a star-biclique partition, return {\sc 	reject}.
\item[2.] If $G = 3K_1$, return {\sc accept}.
\item[3.] Remove the first vertex, $v$, from $\pi$ and set $G = G 
	- v$.  
\end{enumerate}
\hrule
\caption{A Polynomial Time Verifier for $\overline{\mbox{BP2}}$}
\end{figure}

Suppose that $(G, \pi)$ is input to the algorithm. 

If either $G\in{\mbox{BP1}}$ or $\pi$ is not obviously a longest safe sequence, the pair $(G, \pi)$ is rightly rejected in Step~0. 

If $G\in{\mbox{BP2}}\setminus \mbox{BP1}$, then any repeated removal of zero or more vertices from $G$ eventually necessarily results in a graph $H$ that allows a star-biclique partition (Theorem~\ref{thm-bp2n1-proof}) before giving rise to any graph that is probably not in {BP2}. Step~1 therefore ensures that no graph $G\in{\mbox{BP2}}\setminus \mbox{BP1}$ ever leads to the acceptance of the pair $(G, \pi)$ with any false safe sequence  $\pi$ by detecting as and when a star-biclique structure arises from such a $G$; we know from Theorem~\ref{star-biclique} that this deduction can be carried out in polynomial time. 

If $G\in\overline{BP2}$ but $\pi$ is not safe for $\overline{BP2}$, then $\pi$ has a prefix whose removal from $G$ results in a graph $H$ in BP2. If $H$ does not admit a star-biclique partition, then continuing the removals further must (as argued in the preceding paragraph) eventually result in a graph that admits such a partition before possibly resulting in a graph that is not in BP2. Step~2 therefore also ensures that no wrong safe sequence $\pi$ even with a $G\in\overline{BP2}$ leads to the acceptance of $(G, \pi)$.

If $G$ is a graph in  $\overline{\mbox{BP2}}$ on $n$ vertices and $\pi$ is a permutation of $n-3$ vertices of $\pi$ that is safe for $\overline{\mbox{BP2}}$ (such a sequence exists from Theorem~\ref{thm-nbp2-proof}), then $\pi$ is necessarily a longest sequence that is safe for $\overline{\mbox{BP2}}$ and each subgraph of $G$ obtained by deleting a prefix of $\pi$ is in  $\overline{\mbox{BP2}}$ and so none of them can clearly allow a star-biclique partition. Moreover, deleting all the vertices from such a $\pi$ must necessarily result in $3K_1$; for this is the only graph on three vertices that is in $\overline{\mbox{BP2}}$. Therefore, such an input pair $(G, \pi)$ is eventually rightly accepted, as can be easily verified, in Step~2 of the algorithm. 

Steps~3 simply deletes the next vertex in $\pi$ from $G$. The sequence $\pi$ cannot be empty when the control enters Step~3 because it must have at least four vertices. For, if it has only three vertices, it  must have either allowed a star-biclique partition already or been equal to $3K_1$ already; and the algorithm would have already stopped with an {\sc accept} or a {\sc reject}.  
\end{proof}


\section*{Conclusion} It remains an interesting open problem to see if the two biclique partition problem has a polynomial time algorithm. A negative answer to it, in particular, will resolve the famous P versus NP problem.

\end{document}